\newtheorem{definition}{Definition}
\newtheorem{theorem}{Theorem}
\newtheorem{lemma}{Lemma}
\newcommand{\etal}{\textit{et al}.\xspace}
\DeclareMathOperator{\polylog}{polylog}
\begin{document}

\title{Translation Invariant Fr\'echet Distance Queries}

\author{
 Joachim Gudmundsson, Andr{\'e} van Renssen, Zeinab Saeidi, Sampson Wong
}

\date{}

\maketitle

\begin{abstract}
The Fr\'echet distance is a popular similarity measure between curves. For some applications, it is desirable to match the curves under translation before computing the Fr\'echet distance between them. This variant is called the Translation Invariant Fr\'echet distance, and algorithms to compute it are well studied. The query version, finding an optimal placement in the plane for a query segment where the Fr\'echet distance becomes minimized, is much less well understood.

We study Translation Invariant Fr\'echet distance queries in a restricted setting of horizontal query segments. More specifically, we preprocess a trajectory in $\mathcal O(n^2 \log^2 n) $ time and $\mathcal O(n^{3/2})$ space, such that for any subtrajectory and any horizontal query segment we can compute their Translation Invariant Fr\'echet distance in $\mathcal O(\polylog n)$ time. We hope this will be a step towards answering Translation Invariant Fr\'echet queries between arbitrary trajectories.

\end{abstract}

\section{Introduction}
The Fr{\'e}chet distance is a popular measure of similarity between curves as it takes into account the location and ordering of the points along the curves, and it was introduced by Maurice Fr{\'e}chet in 1906~\cite{15}. Measuring the similarity between curves is an important problem in many areas of research, including computational geometry~\cite{1,b-wwdtt-14,9}, computational biology~\cite{JiangXZ08,WylieZ13}, data mining~\cite{12,13,wszsz-estsm-13}, image processing~\cite{2,14} and geographical information science~\mbox{\cite{l-cma-14,8,rt-hcmrp-14,td-tsm-15}}. 

The Fr{\'e}chet distance is most commonly described as the dog-leash distance. Let a trajectory be a polygonal curve in Euclidean space. Consider a man standing at the starting point of one trajectory and the dog at the starting point of another trajectory. A leash is required to connect the dog and its owner. Both the man and his dog are free to vary their speed, but they are not allowed to go backward along their trajectory. The cost of a walk is the maximum leash length required to connect the dog and its owner from the beginning to the end of their trajectories. The Fr{\'e}chet distance is the minimum length of the leash that is needed over all possible walks. More formally, for two curves $A$ and $B$ each having complexity $n$, the Fr{\'e}chet distance between $A$ and $B$ is defined as:
\begin{equation*}
\delta_{F} (A,B)=\inf_{\mu} \max_{a \in A} dist(a,\mu(a))
\end{equation*}
where $dist(a,b)$ denotes the Euclidean distance between point $a$ and $b$ and $\mu:A \rightarrow B$ is a continuous and non-decreasing function that maps every point in $a \in A$ to a point in $\mu(a) \in B$.

Since the early 90's the problem of computing the Fr{\'e}chet distance between two polygonal curves has received considerable attention. In 1992 Alt and Godau~\cite{1} were the first to consider the problem and gave an $\mathcal O(n^2 \log n)$ time algorithm for the problem. 
The only improvement since then is a randomized algorithm with running time $\mathcal O(n^2(\log\log n)^2)$ in the word RAM model by Buchin~\etal~\cite{bbmm-fswdi-17}. In 2014 Bringmann~\cite{b-wwdtt-14} showed that, conditional on the Strong Exponential Time Hypothesis (SETH),  there cannot exist an algorithm with running time $\mathcal O(n^{2-\varepsilon})$ for any $\varepsilon > 0$. 
Even for realistic models of input curves, such as $c$-packed curves~\cite{9}, where the total length of edges inside any ball is bounded by $c$ times the radius of the ball, exact distance computation requires $n^{2-o(1)}$ time under SETH~\cite{b-wwdtt-14}. Only by allowing a $(1 +\varepsilon)$-approximation can one obtain near-linear running times in $n$ and $c$ on $c$-packed curves~\cite{bk-iafdc-17,9}.

In some applications, it is desirable to match the two curves under translation before computing the Fr\'echet distance between them. For example, in sign language or in handwriting recognition, translating an entire movement pattern in space does not change the meaning of the pattern. Other applications where this is true include finding common movement patterns of athletes in sports, of animals in behavioural ecology, or to find similar proteins.

Formally, we match two polygonal curves $A$ and $B$ under the Fr\'echet distance by computing the translation $\tau$ so that the Fr\'echet distance between $A+\tau$ and $B$ is minimised. This variant is called the Translation Invariant Fr\'echet distance, and algorithms to compute it are well studied~\cite{AltKW01,bkn-fdutc-19,JiangXZ08,DBLP:phd/dnb/Wenk03}. Algorithms for the Translation Invariant Fr\'echet distance generally carry higher running times than for the standard Fr\'echet distance, moreover, these running times depend on the dimension of the input curves and on whether the discrete or continuous variant of the Fr\'echet distance is used.

For a discrete sequence of points in two dimensions, Bringmann~\etal~\cite{bkn-fdutc-19} recently provided an $\mathcal O(n^{14/3})$ time algorithm to compute the Translation Invariant Fr\'echet distance, and showed that the problem has a conditional lower bound of $\Omega(n^4)$ under SETH. For continuous polygonal curves in two dimensions, Alt~\etal~\cite{AltKW01} provided an $\mathcal O(n^8 \log n)$ time algorithm, and Wenk~\cite{DBLP:phd/dnb/Wenk03} extended this to an $\mathcal O(n^{11} \log n)$ time algorithm in three dimensions. If we allow for a $(1+\varepsilon)$-approximation then there is an $\mathcal O(n^2 / \varepsilon^2)$ time algorithm~\cite{AltKW01}, which matches conditional lower bound for approximating the standard Fr\'echet distance~\cite{b-wwdtt-14}. 

For both the standard Fr\'echet distance and the Translation Invariant Fr\'echet distance, subquadratic and subquartic time algorithms respectively are unlikely to exist under SETH~\cite{b-wwdtt-14,bkn-fdutc-19}. However, if at least one of the trajectories can be preprocessed, then the Fr\'echet distance can be computed much more efficiently. 

Querying the standard Fr\'echet distance between a given trajectory and a query trajectory has been studied~\cite{3,10,9,5,4}, but due to the difficult nature of the query problem, data structures only exist for answering a restricted class of queries. There are three results which are most relevant. The first is De~Berg~\etal's~\cite{10} data structure, which answers Fr\'echet distance queries between a horizontal query segment and a vertex-to-vertex subtrajectory of a preprocessed trajectory. Their data structure can be constructed in $\mathcal O(n^2 \log^2 n) $ time using $\mathcal O(n^2)$ space such that queries can be answered in $\mathcal O(\log^2 n)$ time. The second is a follow up paper by Buchin~\etal~\cite{eurocg2020improved}, which proves that the data structure of De~Berg~\etal's~\cite{10} requires only $\mathcal O(n^{3/2})$ space. The third is Driemel and Har-Peled's~\cite{9} data structure, which answers approximate Fr\'echet distance queries between a query trajectory of complexity $k$ and a vertex-to-vertex subtrajectory of a preprocessed trajectory. The data structure can be constructed in $ \mathcal{O}(n \log^3 n)$ using $\mathcal{O}(n \log n)$ space, and a constant factor approximation to the Fr\'echet distance can be answered in $\mathcal{O}(k^2 \log n \log (k \log n ))$ time. In the special case when $k=1$, the approximation ratio can be improved to $(1+\varepsilon)$ with no increase in preprocessing or query time in terms of $n$. New ideas are required for exact Fr\'echet distance queries on arbitrary query trajectories. Other query versions for the standard Fr\'echet distance have also been considered~\cite{3,5,4}.

Querying the Translation Invariant Fr\'echet distance is less well understood. This is not surprising given the complexity of computing the Translation Invariant Fr\'echet distance. Nevertheless, in our paper we are able to answer exact Translation Invariant Fr\'echet queries in a restricted setting of horizontal query segments. We hope this will be a step towards answering exact Translation Invariant Fr\'echet queries between arbitrary trajectories.

In this paper, we answer exact Translation Invariant Fr\'echet distance queries between a subtrajectory (not necessarily vertex-to-vertex) of a preprocessed trajectory and a horizontal query segment. The data structure can be constructed in $\mathcal O(n^2 \log^2 n) $ time using $\mathcal O(n^{3/2})$ space such that queries can be answered in $\mathcal O(\polylog n)$ time. We use Megiddo's parametric search technique~\cite{megiddo1981applying} to De~Berg~\etal's~\cite{10} data structure to optimise the Fr\'echet distance. We hope that as standard Fr\'echet distance queries become more well understood, similar optimisation methods could lead to improved data structures for the Translation Invariant Fr\'echet distance as well. 

\section{Preliminaries}
\label{sec: preliminaries}

Let 
$p_{1},\ldots ,p_{n}$ be a sequence of $n$ points in the plane. We denote 
$\pi=(p_{1},p_{2}\ldots,$ $p_{n})$
to be the polygonal curve defined by this sequence. Let $x_{0}\leq x_{1}$ and $y \in \mathbb{R}$, and define $p=\left(x_{0},y\right) $ and $q=\left( x_{1},y\right) $ so that $Q=pq$ is a horizontal segment in the plane. Let $u$ and $v$ be two points on the trajectory $\pi$, then from \cite{10}, the Fr{\'e}chet distance between $\pi[u,v]$ and $Q$ can be computed by using the formula:
\begin{equation*}
\delta_{F}(\pi[u,v],pq)=\max \lbrace \Vert up\Vert, \Vert vq\Vert, \delta_{\overrightarrow{h}} (\pi[u,v],pq), B(\pi[u,v],y)  \rbrace.
\label{1}
\end{equation*}

The first two terms are simply the distance between the starting points of the two trajectories, and the ending points of the two trajectories. The third term is the directed Hausdorff distance between $\pi[u,v]$ and $Q$ which can be computed from:
\begin{equation*}
\delta_{\overrightarrow{h}} (\pi[u,v], Q)=\max \lbrace \max \limits_{p_{i}.x \in (-\infty, x_{0}]} \Vert p-p_{i} \Vert, \max\limits_{p_{i}.x \in [x_{1},\infty)} \Vert q-p_{i} \Vert, \max_{i} \Vert y-p_{i}.y \Vert \rbrace,
\label{14}
\end{equation*}
where each $p_i$ in the formula above is a vertex of the subtrajectory $\pi[u,v]$, and $p_i.x$ denotes the $x$-coordinate of~$p_i$. The formula handles three cases for mapping every point of $\pi[u,v]$ to its closest point on $Q$. The first term describes mapping points of $\pi[u,v]$ to the left of $p$ to their closest point $p$. The second term describes mapping points of $\pi[u,v]$ to the right of $q$ analogously. The third term describes  mapping points of $\pi[u,v]$ that are in the vertical strip between $p$ and $q$ to their orthogonal projection onto $Q$. In later sections we refer to these three terms as $\delta_{\overrightarrow h}(L)$, $\delta_{\overrightarrow h}(R)$ and $\delta_{\overrightarrow h}(M)$ for the left, right, and middle terms of the Hausdorff distance respectively.

The fourth term in our formula for the Fr\'echet distance is the maximum backward pair distance over all backward pairs. A pair of vertices $(p_i, p_j)$ (with $j>i$) is a backward pair if $p_j$ lies to the left of $p_i$. The backward pair distance of $\pi[u,v]$ can be computed from:
\begin{equation*}
B(\pi[u,v],y)=\max_{\forall p_i,p_j \in \pi[u,v]: i \leq j, p_{i}.x \geq p_{j}.x} B_{\left( p_{i},p_{j}\right) }(y),
\end{equation*}
where $B_{(p_i,p_j)}(y)$ is the backward pair distance for a given backward pair $(p_i,p_j)$ and is defined as
\begin{equation*}
B_{\left( p_{i},p_{j}\right) }(y)=\min \limits_{x \in \mathbb{R}} \max \lbrace \Vert p_{i}-\left( x,y\right)  \Vert, \Vert p_{j}-\left( x,y\right)  \Vert \rbrace.
\end{equation*}

The distance terms in the braces compute the distance between a given point $(x,y)$ and the farthest of $p_i$ and $p_j$. Let us call this the backward pair distance of $(x,y)$. Then the function $B_{(p_i,p_j)}(y)$ denotes the minimum backward pair distance of a given backward pair $(p_i,p_j)$ over all points $(x,y)$ which have the same $y$-coordinate. Taking the maximum over all backward pairs gives us the backward pair distance for $\pi[u,v]$. Note that the backwards pair distance doesn't need a restriction to the $x$-coordinates of the horizontal segment and only depends on the $y$-coordinate of the horizontal segment.

In Figure~\ref{fig: backward_pair}, on the left, we show in red the point with minimum backward pair distance for $(p_i,p_j)$, for each $y$-coordinate. We show in red the associated distance for the minimum backwards pair distance on the right, where the distance is plotted along the $x$-axis. We see in the figure that the function $B_{(p_i,p_j)}(y)$ consists of two linear functions joined together in the middle with a hyperbolic function.

\begin{figure}[h]		
\centering
\includegraphics{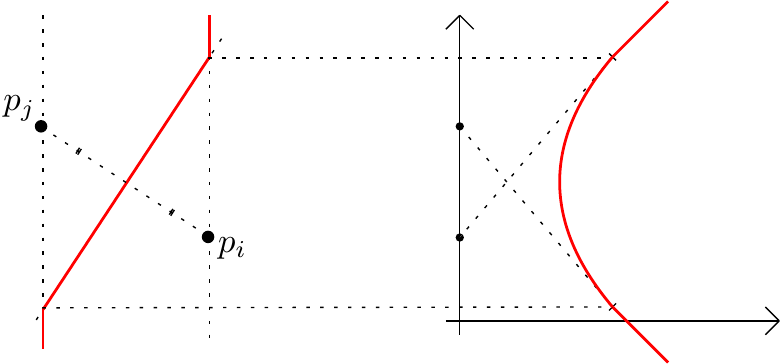}
\caption{For each $y$-coordinate, Left: the point with minimum backward pair distance, Right: the minimum backward pair distance.}
\label{fig: backward_pair}
\end{figure}

We extend the work of De Berg~\etal~\cite{10} in two ways. First, we provide a method for answering Fr\'echet distance queries between $\pi[u,v]$ and $Q$ when $u$ and $v$ are not necessarily vertices of $\pi$, and second, we optimise the placement of $Q$ to minimise its Fr\'echet distance to $\pi[u,v]$. We achieve both of these extensions by carefully applying Megiddo's parametric search technique~\cite{megiddo1981applying} to compute the optimal Fr\'echet distance. 

In order to apply parametric search, we are required to construct a set of critical values (which we will describe in detail at a later stage) so that an optimal solution is guaranteed to be contained within this set. Since this set of critical values is often large, we need to avoid computing the set explicitly, but instead design a decision algorithm that efficiently searches the set implicitly. Megiddo's parametric search~\cite{megiddo1981applying} states that if:
\begin{itemize}
    \item the set of critical values has polynomial size, and
    \item the Fr\'echet distance is convex with respect to the set of critical values, and
    \item a comparison-based decision algorithm decides if a given critical value is equal to, to the left of, or to the right of the optimum,
\end{itemize}
then there is an efficient algorithm to compute the optimal Fr\'echet distance in $\mathcal{O} (PT_{p}+T_{p}T_{s}\log P)$ time, where $P$ is the number of processors of the (parallel) algorithm, $T_p$ is the parallel running time and $T_s$ is the serial running time of the decision algorithm. For our purposes, $P=1$ since we run our queries serially, and $T_p = T_s$ = $\mathcal O(\polylog n)$ for the decision versions of our query algorithms.

\section{Computing the Fr\'echet Distance} \label{sec:vers 1}
The first problem we apply parametric search to is the following. Given any horizontal query segment $Q$ in the plane and any two points $u,v$ on $\pi$ (not necessarily vertices of $\pi$), determine the Fr{\'e}chet distance between $Q$ and the subtrajectory $\pi[u,v]$. 

\begin{figure}[bth]		
\centering
\includegraphics[width=0.8\textwidth]{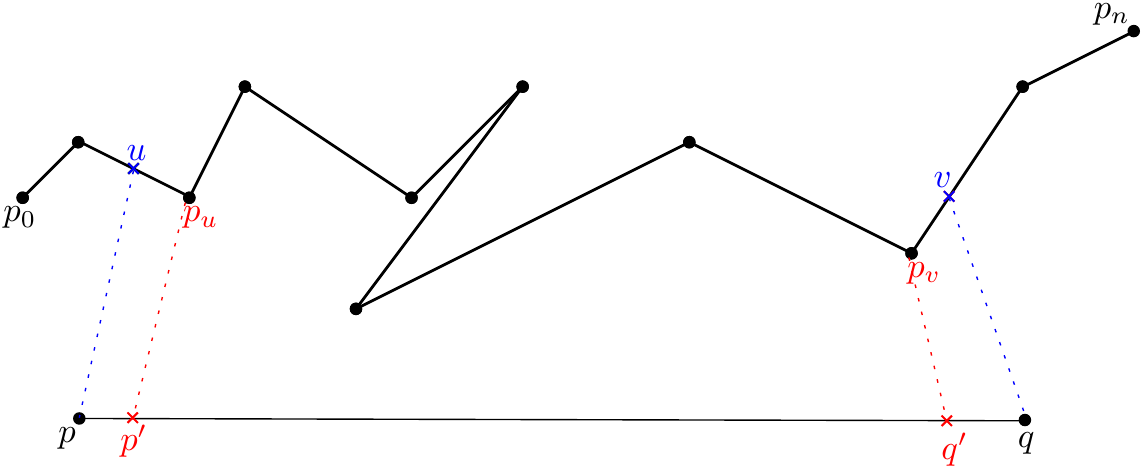}
\caption{The points $p'$ and $q'$ mapped to the vertices $p_u$ and $p_v$ of the trajectory.}
\label{2}
\end{figure}

Let $p_u$ be the first vertex of $\pi$ along $\pi[u,v]$ and let $p_v$ be the last vertex of $\pi$ along $\pi[u,v]$, as illustrated in Figure~\ref{2}. If $p_u$ and $p_v$ do not exist then $\pi[u,v]$ is a single segment so the Fr\'echet distance between $\pi[u,v]$ and $Q$ can be computed in constant time. Otherwise, our goal is to build a Fr\'echet mapping $\mu: \pi[u,v] \rightarrow Q$ which attains the optimal Fr\'echet distance. We build this mapping $\mu$ in several steps. Our first step is to compute points $p'$ and $q'$ on the horizontal segment $pq$ so that $p' = \mu(p_u)$ and $q' = \mu(p_v)$.

If the point $p'$ is computed correctly, then the mapping $p' \to p_u$ allows us to subdivide the Fr{\'e}chet computation into two parts without affecting the overall value of the Fr{\'e}chet distance. In other words, we obtain the following formula:
\begin{equation}
\delta_{F}(\pi[u,v],pq)=\max \lbrace \delta_{F}(up_{u},pp'),~\delta_{F}(\pi[p_{u},v],p'q) \rbrace
\label{3}
\end{equation}

We now apply the same argument to $p_v$. We compute $q'$ optimally on the horizontal segment $p'q$ optimally so that mapping $p_v \to q'$ does not increase the Fr{\'e}chet distance between the subtrajectory $\pi[p_u, v]$ and the truncated segment $p'q$. In other words, we have:
\begin{equation}
\delta_{F}(\pi[u,v],pq)=\max \lbrace \delta_{F}(up_{u},pp'),~\delta_{F}(\pi[p_{u},p_{v}],p'q'),~\delta_{F}(p_{v}v,q'q) \rbrace
\label{4}
\end{equation}

Now that $p_u$ and $p_v$ are vertices of $\pi$, \cite{10} provides an efficient data structure for computing the middle term $\delta_{F}(\pi[p_{u},p_{v}],p'q')$. The first and last terms have constant complexity and can be handled in constant time. All that remains is to compute the points $p'$ and $q'$ efficiently. 

\begin{theorem}
Given a trajectory $\pi$ with $n$ vertices in the plane. There is a data structure that uses $\mathcal{O}(n^{2}\log^{2} n)$ preprocessing time and $\mathcal O(n^{3/2})$ space, such that for any two points $u$ and $v$ on $\pi$ (not necessarily vertices of $\pi$) and any horizontal query segment $Q$ in the plane, one can determine the exact Fr{\'e}chet distance between $Q$ and the subtrajectory from $u$ to $v$ in $\mathcal{O}(\log^{8} n)$ time.
\label{26}
\end{theorem}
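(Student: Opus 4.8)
The plan is to apply Megiddo's parametric search, following Section~\ref{sec: preliminaries}, with the searched parameter a guess $r$ for the Fr\'echet distance and a decision procedure built on the data structure of De~Berg~\etal~\cite{10}. We first reduce the problem to locating the split points $p'$ and $q'$ of Figure~\ref{2}. Observe that for two segments $ab$ and $cd$ the linear matching has leash vector $(1-t)(a-c)+t(b-d)$, a convex combination of the two endpoint difference vectors, so $\delta_F(ab,cd)=\max\{\|ac\|,\|bd\|\}$; hence in the decomposition established above $\delta_F(up_u,pp')=\max\{\|up\|,\|p_up'\|\}$, $\delta_F(p_vv,q'q)=\max\{\|p_vq'\|,\|vq\|\}$, and
\[
\delta_F(\pi[u,v],pq)=\min_{p',q'\in pq,\ p'\preceq q'}\ \max\bigl\{\ \|up\|,\ \|p_up'\|,\ \delta_F(\pi[p_u,p_v],p'q'),\ \|p_vq'\|,\ \|vq\|\ \bigr\}.
\]
Here $p_u$ and $p_v$ are found by predecessor/successor search in the sorted vertex list of $\pi$ in $\mathcal{O}(\log n)$ time, and if $\pi[u,v]$ contains no vertex or $p_u=p_v$ the distance is computed directly in $\mathcal{O}(1)$ time. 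Since $\pi[p_u,p_v]$ is vertex-to-vertex, its Fr\'echet distance to a horizontal segment is precisely what the structure of~\cite{10} evaluates, so only the optimisation over $p'$ and $q'$ remains.

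For a fixed value $r$, deciding whether $\delta_F(\pi[u,v],pq)\le r$ reduces to: (i) testing $\|up\|\le r$ and $\|vq\|\le r$; (ii) computing the feasible placement intervals $I_u(r)=pq\cap D(p_u,r)$ and $I_v(r)=pq\cap D(p_v,r)$ for $p'$ and $q'$, each the intersection of the query segment with a disk and obtained in $\mathcal{O}(1)$ time; and (iii) deciding whether some $p'\in I_u(r)$, $q'\in I_v(r)$ with $p'\preceq q'$ has $\delta_F(\pi[p_u,p_v],p'q')\le r$. By the formula of~\cite{10} applied to $\pi[p_u,p_v]$, this quantity decomposes as $\max\{\varphi(p'),\psi(q'),\gamma\}$, where $\gamma$ (the vertical directed-Hausdorff term together with the backward-pair distance) is constant, $\varphi(p')$ combines $\|p_u-p'\|$ with the left directed-Hausdorff term and is piecewise-monotone in the $x$-coordinate of $p'$, and $\psi(q')$ likewise; hence the set of $p'\in I_u(r)$ (resp.\ $q'\in I_v(r)$) keeping the quantity at most $r$ is a sub-interval, which we locate by an inner parametric search whose decision calls are queries to the structure of~\cite{10}, and comparing the two sub-intervals resolves~(iii). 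The whole decision procedure runs in polylogarithmic time.

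Every comparison made by the decision procedure is a sign test of a bounded-degree polynomial in $r$: the distances $\|up\|,\|vq\|$, interval endpoints of the form $p_u.x\pm\sqrt{r^2-(y-p_u.y)^2}$, and the comparisons internal to~\cite{10}. Thus the critical values form a polynomially sized family of roots of bounded-degree polynomials, the decision predicate is monotone in $r$, and $\delta_F(\pi[u,v],pq)$ is one of these critical values, so the hypotheses of Section~\ref{sec: preliminaries} are met. Running Megiddo's technique with $P=1$, we simulate the decision procedure on a symbolic $r$ and resolve each of its comparisons with $\mathcal{O}(1)$ numeric decision calls; since the decision procedure already nests a parametric search over the $\mathcal{O}(\log^2 n)$-time queries of~\cite{10}, careful bookkeeping of the two levels of parametric search gives query time $\mathcal{O}(\log^8 n)$. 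The preprocessing is dominated by building the structure of~\cite{10}, which takes $\mathcal{O}(n^2\log^2 n)$ time and, by Buchin~\etal~\cite{eurocg2020improved}, $\mathcal{O}(n^{3/2})$ space, plus $\mathcal{O}(n)$ for the sorted vertex list.

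The main obstacle is twofold. Structurally, we must verify the hypotheses of the parametric-search framework for the optimisation over $p'$ and $q'$ — that $\delta_F(\pi[p_u,p_v],p'q')$ depends on the split points in the monotone, (piecewise-)convex way that makes the feasible sets of step~(iii) intervals, and that every comparison performed is algebraic of bounded degree so that the critical values stay polynomially many. Algorithmically, the structure of~\cite{10} answers queries only for a fixed query segment, so coupling it with the feasibility computation needed when $p'$ and $q'$ are unknowns is exactly what forces the inner parametric search and, after the bookkeeping of the two nested levels, the $\mathcal{O}(\log^8 n)$ query bound.
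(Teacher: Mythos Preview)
Your approach is correct in spirit but takes a different route from the paper's. The paper performs the outer parametric search over the \emph{position} of the split point $p'$ on $pq$ (with an inner search over the position of $q'$): for a candidate $s$ it evaluates $F(s)=\max\{\|up\|,\|vq\|,\|p_us\|,\delta_{\overrightarrow h}(\pi[p_u,v],sq),B\}$ via \cite{10}, inspects which term attains the maximum, and from that term reads off a left/right direction for $s$. Convexity is argued termwise, and the critical values are intersections of the constant/hyperbolic pieces making up $F$. You instead run the outer parametric search over the \emph{value} $r$ and, for fixed $r$, separate the De~Berg formula into $\max\{\varphi(p'),\psi(q'),\gamma\}$, reduce feasibility to an interval--compatibility test, and locate those intervals by an inner parametric search over position. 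Both schemes nest two searches on top of the $\mathcal O(\log^2 n)$ queries of \cite{10}, so both reach $\mathcal O(\log^8 n)$.

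What each buys: the paper's version never needs the clean variable separation $\varphi(p')/\psi(q')$ and gets its directional signal for free from the dominating term, but in exchange must justify the left/right case analysis. Your version makes the decision step conceptually simpler (monotone in $r$), at the cost of the extra structural claim that $\varphi$ is convex so that $\{p':\varphi(p')\le r\}$ is an interval, and of actually extracting the interval endpoints. Note that this last step is not a vanilla parametric search, since the predicate $\varphi(s)\le r$ is unimodal rather than monotone in $s$; you would first locate the minimiser of $\varphi$ (using the sign of the dominating term exactly as the paper does) and then search on each side. You flag this under ``the main obstacle'' but it deserves to be spelled out, as it is where your argument implicitly reuses the paper's direction-from-dominating-term idea.
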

\begin{proof}
\textbf{Decision Algorithm.}
Let $S$ be the set of critical values (defined later in this proof), let $s$ be the current candidate for the point $p'$, and let $F(s) = \max \{\delta_{F}(ps, up_u), \delta_{F}(s q, \pi[p_u,v])\}$ be the minimum Fr\'echet distance between $pq$ and $\pi[u,v]$ subject to  $p_u$ being mapped to $s$. Our aim is to design a decision algorithm that runs in $\mathcal O(\log^4 n)$ time that decides whether the optimal $p'$ is equal to $s$, to the left of $s$ or to the right of $s$. This is equivalent to proving that all points to one side of $s$ cannot be the optimal $p'$ and may be discarded.

We use the Fr\'echet distance formula from Section~\ref{sec: preliminaries} to rewrite $F(s)=\max(\Vert up \Vert, \Vert vq \Vert,$ $\Vert p_{u} s \Vert,$ $\delta_{\overrightarrow{h}} (\pi[p_u,v],sq), B(\pi[p_u,v],y))$. Then we take several cases for which of these five terms attains the maximum value $F(s)$, and in each case we either deduce that $p' = s$ or all critical values to one side of $s$ may be discarded.

\begin{itemize}
    \item If $F(s) = \max(\Vert up \Vert, \Vert vq \Vert, B(\pi[p_u,v],y))$, then $p' = s$. We observe that none of the three terms on the right hand side of the equation depend on the position of $s$. Hence, $F(s) = \max(\Vert up \Vert, \Vert vq \Vert, B(\pi[p_u,v],y)) \leq F(p')$, and since $F(p')$ is the minimum possible value, $F(s) = F(p')$. We have found a valid candidate for $p'$ and can discard all other candidates in the set $S$.
    \item If $F(s) = \Vert p_u s \Vert$ and $p_u$ is to the right (left) of $s$ (see Figure~\ref{fig:theo1-1}), then $p'$ is to the right (left) of $s$. We will argue this for when $p_u$ is to the right of $s$, but an analogous argument can be used when $p_u$ is to the left. We observe that all points $t$ to the left of $s$ will now have $\Vert p_u t \Vert > \Vert p_u s \Vert$. Hence, $F(s) = \Vert p_u s \Vert < \Vert p_u t \Vert \leq F(t)$ for all points $t$ to the left of $s$, therefore all points to the left of $s$ may be discarded. 
    \begin{figure}[ht]
        \centering
        \includegraphics{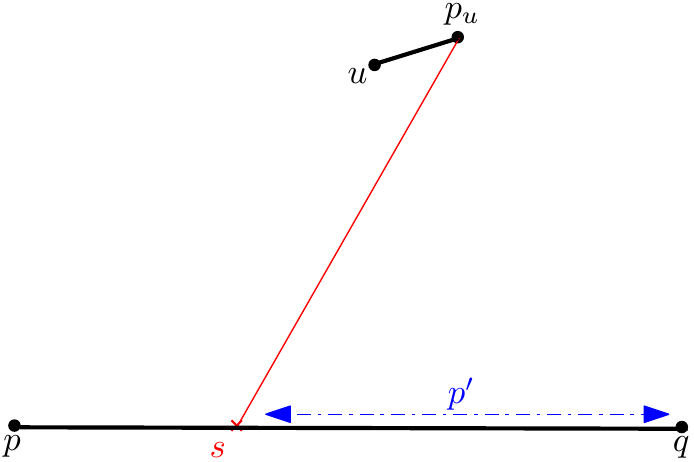}
        \caption{The decision algorithm moving to the right of the current candidate}
        \label{fig:theo1-1}
    \end{figure}
    
    \item If $F(s) = \delta_{\overrightarrow{h}} (\pi[p_u,v],sq)$, then $p'$ is to the left of $s$ (see Figure~\ref{fig:theo1-2}). The directed Hausdorff distance maps every point in $\pi[p_u,v]$ to their closest point on $sq$, so by shortening $sq$ to $tq$ for some point $t$ on $sq$ to the right of $s$, the directed Hausdorff distance cannot decrease. Hence, $F(s) \leq F(t)$ for all $t$ to the right of $s$, so all points to the right of $s$ may be discarded.
    \begin{figure}[ht]
        \centering
        \includegraphics{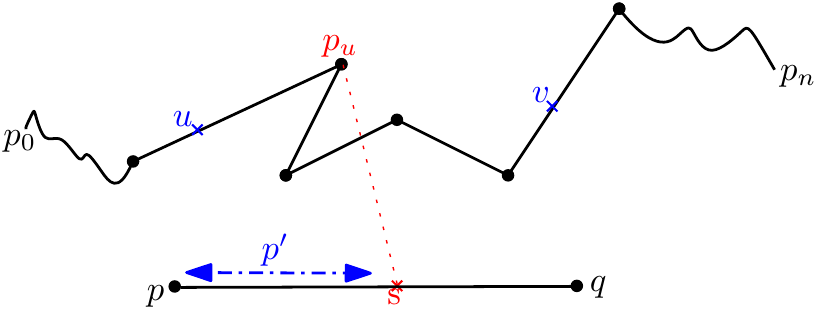}
        \caption{The decision algorithm moving to the left of the current candidate}
        \label{fig:theo1-2}
    \end{figure}
    
\end{itemize}

To determine $q'$ for a fixed candidate $s$ for $p'$, we treat the problem in a similar way. We consider the subtrajectory $\pi[p_u, v]$ and the horizontal line segment $sq$. Defining a function $G(t)$ representing the Fr\'echet distance when $p_v$ is mapped to $t$, we obtain a similar decision algorithm. The most notable difference is that since we now consider the end of the subtrajectory, the decisions for moving $t$ left and right are reversed. 

\textbf{Convexity.} We will prove that $F(s)$ is convex, and it will follow similarly that $G(t)$ is convex. It suffices to show that $F(s)$ is the maximum of convex functions, since the maximum of convex functions is itself convex. The three terms $\Vert up \Vert,$ $\Vert vq \Vert,$ $B(\pi[p_u,v],y)$ are constant. The term $\Vert p_u s \Vert$ is an upward hyperbola and is convex. If suffices to show that $\delta_{\overrightarrow{h}} (\pi[p_u,v],sq)$ is convex. 

We observe that the Hausdorff distance $\delta_{\overrightarrow{h}} (\pi[p_u,v],sq)$ must be attained at a vertex $p_i$ of $\pi[p_u,v]$, and that each of $\delta_{\overrightarrow{h}} (p_i,sq)$ as a function of $s$ is a constant function between $p$ and $p_i^*$, and a hyperbolic function between $p_i^*$ and $q$. Thus, the function for each $p_i$ is convex, so the overall Hausdorff distance function is also convex.

\textbf{Critical Values.} A critical value is a value $c$ which could feasibly attain the minimum value $F(c) = F(p')$. We represent $F(s)$ as the minimum of $n$ simple functions and then argue that the minimum of $F$ can only occur at the minimum of one of these functions, or at the intersection of a pair of these functions. 

First, $\Vert up \Vert, \Vert vq \Vert,  B(\pi[p_u,v],y)$ are constant functions in terms of $s$. Next, $\Vert p_{u}s \Vert$ is a hyperbolic function. Finally, $\delta_{\overrightarrow{h}} (\pi[p_u,v],sq)$ is not itself simple, but it can be rewritten as the combination of $n$ simple functions as described in the above section.

Hence, $F(s)$ is the combination (maximum) of $n$ simple functions, and these functions are simple in that they are piecewise constant or hyperbolic. Hence $F(s)$ attains its minimum either at the minimum of one of these $n$ functions, or at a point where two of these functions intersect. Therefore, there are at most $\mathcal O(n^2)$ critical values for $F(s)$.

\textbf{Query Complexity.} 
Computing $q'$ for a given candidate $s$ for $p'$ takes $\mathcal{O}(\log^{4} n)$ time: We can compute the terms $\Vert up \Vert$, $\Vert p_{u}s \Vert$, $\Vert vq \Vert$, and $\Vert p_{v} q' \Vert$ in constant time. The terms $B(\pi[p_u,p_v],y)$ and $\delta_{\overrightarrow{h}} (\pi[p_u,p_v],sq')$ can be computed in $\mathcal{O}(\log^{2} n)$ time using the existing data structure by De Berg~\etal~\cite{10}. 
We need to determine the time complexity of the sequential algorithm $T_{s}$, parallel algorithm $T_{p}$, and the number of the processor $P$. To find $q'$, the decision algorithm takes $T_{s}=\mathcal{O}(\log^{2} n)$. The parallel form runs on one processor in $T_{p}=\mathcal{O}(\log^{2} n)$. Substituting these values in the running time of the parametric search of $\mathcal{O} (PT_{p}+T_{p}T_{s}\log P)$ leads to $\mathcal{O}(\log^{4} n)$ time.

The above analysis implies that $p'$ itself can be computed in $\mathcal{O}(\log^{8} n)$ time: For a given $s$, the decision algorithm runs in $T_{s}=\mathcal{O}(\log^{4} n)$ as mentioned above. The parallel form of the decision algorithm runs on one processors in $T_{p}=\mathcal{O}(\log^{4} n)$. Substituting these values in the running time of the parametric search of $\mathcal{O} (PT_{p}+T_{p}T_{s}\log P)$ leads to $\mathcal{O}(\log^{8} n)$ time.

\textbf{Preprocessing and Space.}
To compute the second term of Formula~\ref{4}, we use the data structure by De Berg~\etal~\cite{10}. This data structure uses $\mathcal{O}(n^{2}\log^{2} n)$ preprocessing time and supports $\mathcal{O}(\log^{2} n)$ query time. Recently, the data structure was shown to require only $\mathcal O(n^{3/2})$ space~\cite{eurocg2020improved}.
\end{proof}

We note that the set of critical values can be restricted significantly, while still being guaranteed to contain optimal elements to use as $p'$ and $q'$. Specifically, we can reduce the size of this set from $\mathcal O(n^2)$ to $\mathcal O(n)$. Since this does not improve the running time of the above algorithm, details on this improvement are deferred to Appendix~\ref{app:improvement}.

\section{Minimizing the Fr\'echet Distance Under Vertical Translation} 
\label{sec:vers2}

We move on to the problem of minimising Fr\'echet distance under translations. We first focus on a special case where the horizontal segment can only be translated vertically. In Section~\ref{sec:vers3} we consider arbitrary translations of the horizontal segment.

To this end, let us consider the following problem. Let $\pi$ be a trajectory in the plane with $n$ vertices. We preprocess $\pi$ into a data structure such that for a query specified by 
\begin{enumerate}
\item two points $u$ and $v$ on the trajectory $\pi$,
\item two vertical lines $x_{1}$ and $x_{2}$ such that $\Vert x_{2}-x_{1} \Vert=L$ 
\end{enumerate}
one can quickly find a horizontal segment $l_{y}$ that spans the vertical strip between $x_{1} $ and $x_{2}$ such that the Fr{\'e}chet distance between $l_{y}$ and the subtrajectory $\pi[u,v]$ is minimised; see Figure~\ref{13}. 

\begin{figure}[h]		
\centering
\includegraphics{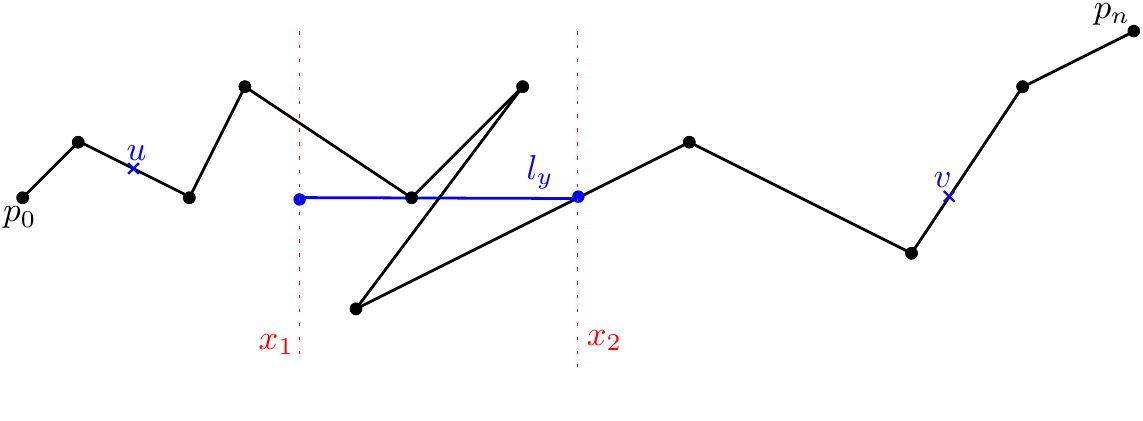}
\caption{Finding a horizontal segment $l_y$ in the vertical strip between $x_1$ and $x_2$ that minimises the Fr\'echet distance between $l_y$ and $\pi[u,v]$.}
\label{13}
\end{figure}

In the next theorem, we present a decision problem $D_{\pi[u,v]}(x_{1},x_{2},l_{y}^c)$ that, for a given trajectory $\pi$ with two points $u$ and $v$ on $\pi$ and two vertical lines $x = x_1$ and $x=x_2$, returns whether the line $l_y$ is above, below, or equal to the current candidate line $l_y^c$. We then use parametric search to find $l_y$ that minimises the Fr\'echet distance.

\begin{theorem} \label{thm:vers2}
Given a trajectory $\pi$ with $n$ vertices in the plane. There is a data structure that uses $\mathcal{O}(n^{2}\log^{2} n)$ preprocessing time and $\mathcal O(n^{3/2})$ space, such that for any two points $u$ and $v$ on $\pi$ (not necessarily vertices of $\pi$) and two vertical lines $x=x_1$ and $x=x_2$, one can determine the horizontal segment $l_y$ with left endpoint on $x=x_1$ and right endpoint on $x=x_2$ that minimises its Fr\'echet distance to the subtrajectory $\pi[u,v]$ in $\mathcal{O}(\log^{16} n)$ time. 
\end{theorem}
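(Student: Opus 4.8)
The plan is to run Megiddo's parametric search over the single scalar parameter $y$, the common height of the two endpoints of $l_y$, using Theorem~\ref{26} as the oracle that evaluates the Fr\'echet distance for any fixed height. Writing $p=(x_1,y)$, $q=(x_2,y)$ and $F(y):=\delta_F(\pi[u,v],pq)$, I would work throughout with the direct distance formula from Section~\ref{sec: preliminaries},
\begin{equation*}
F(y)=\max\bigl\{\,\Vert up\Vert,\ \Vert vq\Vert,\ \delta_{\overrightarrow h}(\pi[u,v],pq),\ B(\pi[u,v],y)\,\bigr\},
\end{equation*}
which is valid even when $u,v$ are not vertices and which, unlike the $p',q'$ decomposition of Theorem~\ref{26}, introduces no auxiliary quantities that move with $y$. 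As in the proof of Theorem~\ref{26}, I then need three ingredients: convexity of $F$ in $y$; a polynomially bounded set of critical values containing the minimiser of $F$; and a comparison-based decision algorithm $D_{\pi[u,v]}(x_1,x_2,l_y^c)$ that reports whether the optimal line lies above, below, or at a candidate line $l_y^c$.

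\textbf{Convexity.} I would show $F$ is a maximum of convex functions of $y$. The terms $\Vert up\Vert=\Vert u-(x_1,y)\Vert$ and $\Vert vq\Vert=\Vert v-(x_2,y)\Vert$ are convex (upward hyperbolas). The directed Hausdorff term is the maximum, over the vertices $p_i$ of $\pi[u,v]$, of the three contributions $\delta_{\overrightarrow h}(L),\delta_{\overrightarrow h}(M),\delta_{\overrightarrow h}(R)$; since $x_1,x_2$ are fixed, each vertex falls into exactly one case and contributes either $\vert y-p_i.y\vert$ or $\Vert(x_1,y)-p_i\Vert$ or $\Vert(x_2,y)-p_i\Vert$, all convex, so $\delta_{\overrightarrow h}(\pi[u,v],pq)$ is convex. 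The one genuinely new point relative to Theorem~\ref{26} — where $B$ was constant because $y$ was fixed — is that $B(\pi[u,v],y)$ is now convex in $y$: for each backward pair $(p_i,p_j)$ the function $g(x,y)=\max\{\Vert p_i-(x,y)\Vert,\Vert p_j-(x,y)\Vert\}$ is jointly convex on $\mathbb R^2$ with its minimum over $x$ attained, so its partial minimisation $B_{(p_i,p_j)}(y)=\min_x g(x,y)$ is convex in $y$, whence $B(\pi[u,v],y)=\max_{(p_i,p_j)}B_{(p_i,p_j)}(y)$ is convex. Therefore $F$ is convex, it grows without bound as $y\to\pm\infty$, and its minimiser is well defined.

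\textbf{Critical values, decision algorithm, and bookkeeping.} As in Theorem~\ref{26}, $F$ is the maximum of $\mathcal O(n^2)$ simple convex pieces (the backward pairs dominate; endpoints and the Hausdorff term together give only $\mathcal O(n)$), each piecewise linear or hyperbolic, so the minimiser of $F$ is attained either at a breakpoint of one piece or at an intersection of two — $\mathcal O(n^4)$ values, polynomial and hence acceptable. For the decision algorithm at $l_y^c$ I would first compute $F(y^c)$ in $\mathcal O(\log^8 n)$ time by invoking Theorem~\ref{26} on the fixed segment $l_{y^c}$; then determine which of the four terms attains the maximum, and inside the Hausdorff term which vertex and inside the backward-pair term which pair is the witness — each is a query supported by the data structure of De~Berg~\etal~\cite{10}, which can be augmented to report its witness at no asymptotic cost, or one can sidestep witnesses by re-running Theorem~\ref{26}'s algorithm on the symbolic inputs $y^c\pm\varepsilon$ and resolving each comparison by its leading order in $\varepsilon$. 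From the active terms one reads off the left and right derivatives $F'_-(y^c)$ and $F'_+(y^c)$ (each active hyperbolic or absolute-value piece has an explicit slope, and the right/left derivative of a maximum of such pieces is the max/min of their one-sided slopes over the active set), and since $F$ is convex the optimum lies below $l_y^c$ if $F'_-(y^c)>0$, above if $F'_+(y^c)<0$, and at $l_y^c$ otherwise. This decision algorithm runs in $T_s=T_p=\mathcal O(\log^8 n)$, so the parametric search over $y$ costs $\mathcal O(\log^{16} n)$, and the $\mathcal O(n^2\log^2 n)$ preprocessing time and $\mathcal O(n^{3/2})$ space are inherited from the data structures of De~Berg~\etal~\cite{10} and Buchin~\etal~\cite{eurocg2020improved}.

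\textbf{Main obstacle.} The delicate step is the decision algorithm's identification of the active terms and their one-sided slopes at $l_y^c$: one must extract from the essentially black-box Fr\'echet-query structure not only $F(y^c)$ but a witness for every term that ties for the maximum, and then combine their slopes correctly. Tie handling is unavoidable and is in fact the point — at the optimal height several terms typically meet, and it is precisely the coexistence of an increasing and a decreasing active term (or an active term attaining its own minimum) that certifies optimality — so the case analysis must enumerate the possible maximisers carefully, just as in the proof of Theorem~\ref{26} but now with the extra, non-constant term $B(\pi[u,v],y)$ in play; the partial-minimisation argument above is the clean observation that keeps $B$'s convexity (and hence the whole framework) intact under vertical translation.
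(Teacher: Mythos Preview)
Your proposal is correct and follows essentially the same approach as the paper: parametric search over $y$ using Theorem~\ref{26} as the decision oracle, with convexity established by writing $F(y)$ as a maximum of convex pieces and the decision read off from which term is active and on which side of $l_y^c$ its witness lies. The only cosmetic differences are that the paper cites De~Berg~\etal's explicit two-rays-plus-hyperbola description of $B_{(p_i,p_j)}(y)$ for convexity (where you use the cleaner partial-minimisation argument) and phrases the decision as an explicit above/below case split rather than in terms of one-sided derivatives.
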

\begin{proof}
\textbf{Decision Algorithm.} Let $l_{y}^c$ be the current horizontal segment. To decide whether the line segment that minimises the Fr\'echet distance lies above or below $l_y^c$, we must compute the maximum of the terms that determine the Fr{\'e}chet distance: $\Vert up \Vert$, $\Vert vq \Vert$, $\delta_{\overrightarrow{h}} (\pi[u,v],pq)$, and $B(\pi[u,v],l_{y}^c)$. As mentioned in Section~\ref{sec: preliminaries}, we divide the directed Hausdorff distance into three different terms: $\delta_{\overrightarrow{h}}(L)$, $\delta_{\overrightarrow{h}}(R)$, and $\delta_{\overrightarrow{h}}(M)$. We first consider when one term determines the Fr\'echet distance, in which we have the following cases:

\begin{itemize}
\item $\Vert up \Vert$, $\Vert vq \Vert$, $\delta_{\overrightarrow{h}}(L)$, and $\delta_{\overrightarrow{h}}(R)$: Since the argument for these terms is analogous, we focus on $\Vert up \Vert$. If $u$ is located above $l_{y}^c$, the next candidate lies above $l_{y}^c$ (search continues above $l_{y}^c$). If $u$ lies below $l_{y}^c$, the next candidate lies below $l_{y}^c$ (search continues below $l_{y}^c$). If $u$ and $p$ have the same $y$-coordinate, we can stop, since moving $l_{y}^c$ either up or down increases the Fr{\'e}chet distance.
\item $B(\pi[u,v],l_{y}^c)$: If the midpoint of the segment between the backward pair determining the current Fr\'echet distance is located above $l_{y}^c$, the next candidate lies above $l_{y}^c$, since this is the only way to decrease the distance to the further of the two points of the backward pair. If this midpoint lies below $l_{y}^c$, the next candidate lies below $l_{y}^c$. If the midpoint is located on $l_{y}^c$, we can stop, because the term $B_{\left( p_{i},p_{j}\right) }(l_{y}^c)$ increases by either moving $l_{y}^c$ up or down.
\item $\delta_{\overrightarrow{h}}(M)$: If the point with maximum projected distance is located above $l_{y}^c$, the next candidate lies above~$l_{y}^c$. If the point is below $l_y^c$, the next candidate lies below~$l_{y}^c$. If the point is on $l_y^c$, then we stop, but unlike in the first case, this maximum term and the overall Fr\'echet distance must both be zero in this case.
\end{itemize}

If more than one term determine the current Fr\'echet distance, we must first determine the direction of the implied movement for each term. If this direction is the same, we move in that direction. If the directions are opposite, we can stop, because moving in either direction would increase the other maximum term resulting in a larger Fr\'echet distance.

\textbf{Convexity.} It suffices to show the Fr\'echet distance between $\pi[u,v]$ and $l_y^c$ as a function of $y$ is convex. We show that this function is the maximum of several convex functions, and therefore must be convex. The first two terms for computing the Fr\'echet distance are $\Vert up \Vert$ and $\Vert vq \Vert$, which are hyperbolic in terms of $y$. Similarly to the previous section, we handle each of the Hausdorff distances by splitting them up Hausdorff distances for each vertex $p_i$. The left and right Hausdorff distances $\delta_{\overrightarrow{h}}(L)$ and $\delta_{\overrightarrow{h}}(R)$ for a single vertex $p_i$ is a hyperbolic function. The middle Hausdorff distance $\delta_{\overrightarrow{h}}(M)$ for a single vertex $p_i$ is a shifted absolute value function. In all cases, Hausdorff distance for a single vertex is convex, so the overall Hausdorff distance is also convex. Finally, the backward pair distance $B(\pi[u,v],l_{y}^c)$ as a function of $y$ is shown by De Berg~\etal\cite{10} to be two rays joined together in the middle with a hyperbolic arc. It is easy to verify that this function is convex.

\textbf{Critical Values.} A horizontal segment $l_y^c$ is a critical value of a decision algorithm if the decision algorithm could feasibly return that $l_y^c = l_y$. These critical values are the $y$-coordinates of the intersection points of two hyperbolic functions for each combination of two terms of determining the Fr\'echet distance or the minimum point of the upper envelope of two such hyperbolic functions. Therefore, there are only a constant number of critical values for each two terms. Each term gives rise to  $\mathcal{O}(n^2)$ hyperbolic functions (specifically, $B(\pi[u,v],l_{y}^c)$ can be of size $\Theta(n^2)$ in the worst case). Thus, there are $\mathcal{O}(n^{4})$ critical values.

\textbf{Query Complexity.} The decision algorithm runs 
in $T_{s}=T_p=\mathcal{O}(\log^{8} n)$ time since we use Theorem~\ref{26} to compute the Fr{\'e}chet distance for a fixed $l_{y}^c$. Substituting this in the running time of the parametric search $\mathcal{O} (PT_{p}+T_{p}T_{s}\log P)$ leads to a query time of $\mathcal{O}(\log^{16} n)$.

\textbf{Preprocessing and Space.} 
Since we compute the Fr\'echet distance of the current candidate $l_y^c$ using Theorem~\ref{26}, we require $\mathcal{O}(n^{2}\log^{2} n)$ preprocessing time and $\mathcal O(n^{3/2})$ space.
\end{proof}


\section{Minimizing the Fr\'echet Distance for Arbitrary Placement} 
\label{sec:vers3}
Finally, we consider minimising the Fr\'echet distance of a horizontal segment under arbitrary placement. Let $\pi$ be a trajectory in the plane with $n$ vertices. We preprocess $\pi$ into a data structure such that for a query specified by two points $u$ and $v$ on $\pi$ and a positive real value $L$, one can quickly determine the horizontal segment $l$ of length $L$ such that the Fr{\'e}chet distance between $l$ and the subtrajectory $\pi[u,v]$ is minimised. 

In the following theorem, we present a decision problem $D_{\pi[u,v]}(L,x_1)$ that, for a given trajectory $\pi$ with two points $u$ and $v$ on $\pi$ and a length $L$ and an $x$-coordinate $x_1$, returns whether the line $l$ has its left endpoint to the left, on, or to the right of $x_1$. We then apply parametric search to this decision algorithm to find the horizontal segment $l$ of length $L$ with minimum Fr\'echet distance to $\pi[u,v]$.

\begin{theorem}
Given a trajectory $\pi$ with $n$ vertices in the plane. There is a data structure that uses $\mathcal{O}(n^{2}\log^{2} n)$ preprocessing time and $\mathcal O(n^{3/2})$ space, such that for any two points $u$ and $v$ on $\pi$ (not necessarily vertices of $\pi$) and a length $L$, one can determine the horizontal segment $l$ of length $L$ that minimises the Fr\'echet distance to $\pi[u,v]$ in $\mathcal{O}(\log^{32} n)$ time. 
\end{theorem}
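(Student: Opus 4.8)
The plan is to apply Megiddo's parametric search once more, now to the decision problem $D_{\pi[u,v]}(L,x_1)$, using the data structure of Theorem~\ref{thm:vers2} as a subroutine for the vertical placement; so the proof again supplies the three ingredients the framework needs. Given a candidate left $x$-coordinate $x_1$, the segment $l$ spans the vertical strip between $x=x_1$ and $x=x_1+L$. I would query Theorem~\ref{thm:vers2} on this strip to obtain the optimal vertical placement $y^{\ast}=y^{\ast}(x_1)$ and the value $f(x_1):=\delta_F(\pi[u,v],l_{x_1,y^{\ast}})$, together with a description of the term(s) of the Fr\'echet formula of Section~\ref{sec: preliminaries} that attain this maximum (extractable from the same data structure with constant overhead). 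To decide whether $x_1$ lies to the left of, on, or to the right of the optimal left endpoint it suffices --- by convexity of $f$ below --- to determine the signs of the one-sided derivatives $f'_{\pm}(x_1)$, and these can be read off from the active terms at $(x_1,y^{\ast})$: the terms $\Vert up\Vert$, $\Vert vq\Vert$ and the left/right Hausdorff contributions $\delta_{\overrightarrow{h}}(L),\delta_{\overrightarrow{h}}(R)$ each vary hyperbolically in $x_1$ (pulling $x_1$ towards $u.x$, towards $v.x-L$, or towards the attaining vertex), while $\delta_{\overrightarrow{h}}(M)$ and $B(\pi[u,v],y^{\ast})$ do not depend on $x_1$. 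The derivative of $f$ is the combination of these that survives the re-optimisation of $y$, exactly as a single maximum term was handled in Theorem~\ref{thm:vers2}: if all active terms are $x_1$-independent, or if the $x_1$-dependent ones dictate opposite directions, then $x_1$ is optimal; otherwise we move in the indicated direction. The running time is dominated by the single query to Theorem~\ref{thm:vers2}, so $T_s=T_p=\mathcal O(\log^{16}n)$.

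\textbf{Convexity.} It suffices that $f$ is convex, and for this I would prove the stronger statement that $(x,y)\mapsto\delta_F(\pi[u,v],l_{x,y})$ is jointly convex; then $f$, being the partial minimum over $y$ of a jointly convex function, is convex. Write $l_{x,y}=Q+\tau$ as a translate of a fixed reference segment $Q$. The terms $\Vert u-(p+\tau)\Vert$ and $\Vert v-(q+\tau)\Vert$ are a Euclidean norm composed with an affine map of $\tau$, hence convex; $B(\pi[u,v],y)$ depends only on $\tau_y$ and was shown convex in~\cite{10}; and the directed Hausdorff term equals $\max_{p_i}\operatorname{dist}(p_i,Q+\tau)=\max_{p_i}\operatorname{dist}(p_i-\tau,Q)$, each summand of which is the (convex) distance to the fixed segment $Q$ composed with an affine map of $\tau$. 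The maximum of convex functions is convex. The subtlety here, absent from Theorem~\ref{thm:vers2}, is that the $L/M/R$ split of the Hausdorff distance is \emph{not} convex term by term once $x$ varies, because a vertex migrates between the three regions as the strip moves; it is the combined point-to-segment distance $\operatorname{dist}(p_i-\tau,Q)$ that is convex, and this reformulation is also what legitimises the direction test in the decision algorithm.

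\textbf{Critical values and complexity.} As in Theorems~\ref{26} and~\ref{thm:vers2}, the decision algorithm is comparison-based and nests the earlier ones; every comparison it performs is the sign of a bounded-degree algebraic function of the parameter $x_1$, and only polynomially many such functions can arise, so the set of critical values of $x_1$ has polynomial size, as the framework requires. Substituting $T_s=T_p=\mathcal O(\log^{16}n)$ and $P=1$ into the parametric-search bound $\mathcal O(PT_p+T_pT_s\log P)$ gives query time $\mathcal O(\log^{32}n)$. The data structure is exactly that of Theorem~\ref{thm:vers2} --- equivalently, De~Berg~\etal~\cite{10} with the space bound of~\cite{eurocg2020improved} --- so preprocessing takes $\mathcal O(n^2\log^2n)$ time and $\mathcal O(n^{3/2})$ space. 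The main obstacle is the convexity step: making precise that the Fr\'echet distance between $\pi[u,v]$ and a freely placed horizontal segment is jointly convex in its two-dimensional placement, handling in particular the Hausdorff term as vertices cross the strip boundaries, and then converting this into a correct left/on/right test despite the interaction between the horizontal parameter $x_1$ and the vertical re-optimisation performed by Theorem~\ref{thm:vers2}.
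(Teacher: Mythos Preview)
Your convexity argument is cleaner than the paper's. You prove joint convexity of the Fr\'echet distance in the two-dimensional placement $\tau$ by writing each term as a norm (or distance-to-a-fixed-segment) composed with an affine map of $\tau$, and then observe that the partial minimum over $y$ of a jointly convex function is convex in the remaining variable. The paper instead proves midpoint convexity of $f(x_1)$ directly, by interpolating two optimal mappings and bounding each matched distance via a parallelogram inequality. Your route is shorter and also sidesteps the issue you flag about the $L/M/R$ split by passing to the combined point-to-segment distance.

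There is, however, a genuine gap in your decision algorithm. The rule ``if the $x_1$-dependent active terms dictate opposite directions, then $x_1$ is optimal'' is incorrect. Concretely, suppose two $C_1$ terms are active at $(x_1,y^\ast)$, with gradients $(a_1,b_1)$ and $(a_2,b_2)$ where $b_1,b_2$ have opposite signs (this is why the vertical optimisation stopped). Along the curve $y=y^\ast(x)$ where the two terms stay equal, one computes $f'(x_1)=(a_1b_2-a_2b_1)/(b_2-b_1)$, whose sign is \emph{not} determined by the signs of $a_1,a_2$ alone. For instance, with the two anchor points at $(-1,1)$ and $(1,-1)$ and $p=(0.5,0.5)$, the $x$-partials have opposite signs, yet $f'(x_1)>0$ and the optimum lies strictly to the left. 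Your heuristic would halt here; the correct algorithm must continue.

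The paper resolves this by a geometric test rather than a sign test. Each active $C_1$ term defines an open disk of radius $d$ (the current value) in which $p$ must lie to strictly improve that term; after translating the $q$-disks left by $L$, all constraints are on $p$. Each active $C_2$ term defines an open horizontal half-plane. The paper shows that the intersection of these regions, if nonempty, lies entirely to one side of the current $p$ (using that the anchor points lie on opposite sides of $l^c$), and that side is the direction of the next candidate; if the intersection is empty, $x_1$ is optimal. This is exactly the missing ingredient your proposal acknowledges as ``the main obstacle.'' Once you replace the sign heuristic with this disk/half-plane intersection test, the rest of your outline --- the nested parametric search, the running-time calculation, and the appeal to the data structure of Theorem~\ref{thm:vers2} --- goes through.
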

\begin{proof}
\textbf{Decision Algorithm.} For the decision algorithm, we only need to decide whether $l^c$ should be moved to the left or right, with respect to its current position. We classify the terms that determine the Fr\'echet distance in two classes: 
\begin{itemize}
\item $C_{1}$: This class contains the terms whose value is determined by the distance from a point on $\pi(u,v)$ to $p$ or $q$. Hence, it consists of $\Vert up \Vert$, $\Vert vq \Vert$, $\delta_{\overrightarrow{h}}(R)$, and $\delta_{\overrightarrow{h}}(L)$. 
\item $C_{2}$: This class contains the terms whose value is determined by the distance from a point on $\pi(u,v)$ to the closest point on $pq$. Hence, it consists of $\delta_{\overrightarrow{h}}(M)$ and $B(\pi[u,v],l_{y})$. 
\end{itemize}
Next, we show how to decide whether the next candidate line segment lies to the left or right of $l^c$ (i.e., the $x$-coordinate of its left endpoint lies to the left or right of the left endpoint of $l^c$) for each case where $D_{\pi[u,v]}(x_{1},x_{2},l_{y})$ stops. 

We decide this by considering each $C_1$ and $C_2$ term and the restriction they place on the next candidate line segment $pq$. After we do this for each individual $C_1$ or $C_2$ term, we take the intersection of all these restrictions. If the intersection is empty, then our placement of $pq$ was optimal, and our decision algorithm stops. Otherwise we can either move $pq$ to the left or to the right to improve the Fr\'echet distance.

First, consider the $C_1$ terms. Let us assume for now that the $C_1$ term is the distance term $\Vert up \Vert$. Then in order to improve the Fr\'echet distance to $u$, we need to place the horizontal segment $pq$ in such a way that $p$ lies inside the \emph{open} disk centered at $u$ with radius equal to the current Fr\'echet distance $d$. A similar condition holds for the other $C_1$ terms: each defines a disk of radius $d$ and the point it maps to in the next candidate needs to lie inside this disk. 

Similarly, the $C_2$ terms define horizontal open half-planes. Consider the term $\delta_{\overrightarrow{h}}(M)$. This term is reduced when the vertical projection distance to the line segment is reduced. Hence, if the point defining this term lies above $l^c$, this term can be reduced by moving the line segment upward and thus the half-plane is the half-plane above $l^c$. An analogous statement holds if the point lies below $l^c$. For the term $B(\pi[u,v],l_{y})$, we need to consider the midpoint of the bisector, since the implied Fr\'echet distance is the distance from $l^c$ to the further of the two points defining the bisector. Thus, the half-plane that improves the Fr\'echet distance is the one that lies on the same side of $l^c$ as this midpoint. 

To combine all the terms we do the following: First, we take all disks induced by the $C_1$ terms whose distance is with respect to $q$ and translate them horizontally to the left by a distance of $L$. This ensures that the disks constructed with respect to $p$ can now be intersected with the disks constructed with respect to $q$. We take the intersection of all $C_1$ and $C_2$ terms that defined the stopping condition of the vertical optimisation step. If this intersection is empty, by construction there is no point where we can move $p$ to in order to reduce the Fr\'echet distance. If it is not empty, we will show that it lies entirely to the left or entirely to the right of $p$ and thus implies the direction in which the next candidate lies. 

Now that we have described our general approach, we show which cases can occur and show that for each of them we can determine in which direction to continue (if any). 

\begin{figure}[bth]
\centering
\begin{subfigure}[t]{0.4\textwidth}
\centering
\includegraphics{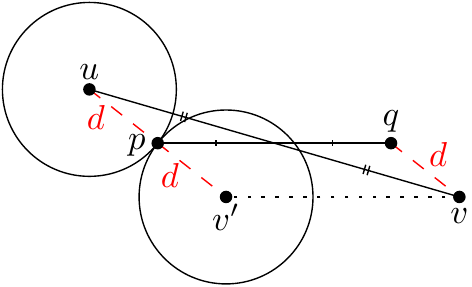}
\caption{The midpoint of $pq$ is the midpoint of $uv$.}
\end{subfigure}
\hspace{2em}
\begin{subfigure}[t]{0.5\textwidth}
\centering
\includegraphics{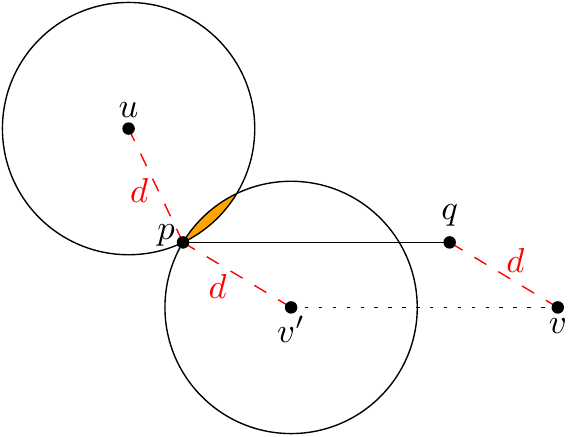}
\caption{Moving the midpoint of $pq$ towards the midpoint of $uv$.} 
\end{subfigure}
\caption{Determining where $l^c$ should be moved to reduce the Fr\'echet distance.}
\label{fig:moving_to_midpoint}
\end{figure}

\textit{Case 1.} $D_{\pi[u,v]}(x_{1},x_{2},l_{y})$ stops because of terms in $C_1$. If only a single term of $C_1$ is involved, say $\Vert up \Vert$, this implies that the $y$-coordinate of $u$ is the same as that of $l^c$ and thus its disk lies entirely to the left of $p$. Hence, we can reduce the Fr\'echet distance by moving $l^c$ horizontally towards $u$ and thus we pick our next candidate in that direction. The same argument follows analogously the $C_1$ term is $\Vert vq \Vert$, $\delta_{\overrightarrow{h}}(R)$, or $\delta_{\overrightarrow{h}}(L)$, the same argument follows analogously the distance is between a point on the trajectory 

If two terms of $C_1$ are involved, say $\Vert up \Vert$ and $\Vert vq \Vert$, their intersection can be empty (see Figure~\ref{fig:moving_to_midpoint}(a)) or non-empty (see Figure~\ref{fig:moving_to_midpoint}(b)). If it is empty, the midpoint of $pq$ is the same as the midpoint of $uv$, which implies that we cannot reduce the Fr\'echet distance. If the intersection is not empty, moving the endpoint of the line segment into this region potentially reduces the Fr\'echet distance. We note that since $\Vert up \Vert$ and $\Vert vq \Vert$ stopped the vertical optimisation, they lie on opposite sides of $l^c$. Hence, the intersection of their disks lies entirely to the left or entirely to the right of $p$ and thus determines in which direction the next candidate lies. 

If three terms in $C_1$ are involved, we again construct the intersection as described earlier. If this intersection is empty (see Figure~\ref{fig:C1C1C1Stopping}), we are again done. If it is not (see Figure~\ref{fig:C1C1C1}), it again determines the direction in which the our next candidate lies, as the intersection of three disks is a subset of the intersection of two disks.

\begin{figure}[h]
\centering
\begin{subfigure}[t]{0.5\textwidth}
\centering
\includegraphics{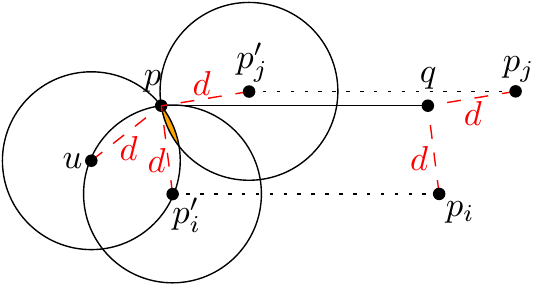}
\caption{The case where we can improve the Fr\'echet distance.}
\label{fig:C1C1C1}
\end{subfigure}
\hspace{1em}
\begin{subfigure}[t]{0.45\textwidth}
\centering
\includegraphics{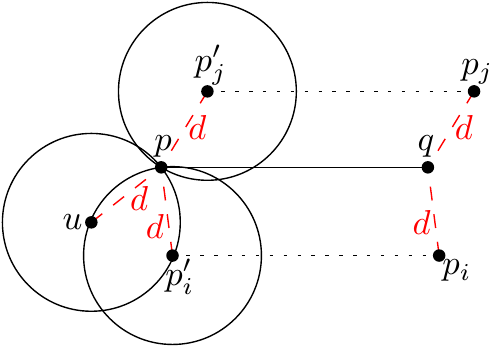}
\caption{The case where we cannot improve the Fr\'echet distance.}
\label{fig:C1C1C1Stopping}
\end{subfigure}
\caption{The case where we have three $C_1$ terms.}
\end{figure}

If there are more than three $C_1$ terms, we reduce this to the case of three $C_1$ terms. If the intersection of these disks is non-empty, then trivially the intersection of a subset of three of them is also non-empty. If the intersection is empty, we select a subset of three whose intersection is also empty. The three disks can be chosen as follows.  Insert the disks in some order and stop when the intersection first becomes empty. The set of three disks consists of the last inserted disk and the two extreme disks among the previously inserted disks. Since the boundary of all the disks must go through a single point and the disks have equal radius, these three disks will have an empty intersection. Hence, the case of more than three disks reduces to the case of three disks. 

\textit{Case 2.} $D_{\pi[u,v]}(x_{1},x_{2},l_{y})$ stops because of a term in $C_2$. Since the vertical optimisation stopped, we know that at least two $C_2$ terms are involved and there exists a pair that lies on opposite sides of $l^c$. These two terms define open half-planes whose intersection is empty, hence we cannot reduce the Fr\'echet distance further. 

\textit{Case 3.} $D_{\pi[u,v]}(x_{1},x_{2},l_{y})$ stops because of terms in $C_1$ and terms in $C_2$. If there are at least three terms in $C_1$, then we can ignore the $C_2$ terms and use the analysis provided in Case~1 on the three terms in $C_1$. If there are at least two terms in $C_2$, then we can ignore the $C_1$ terms and use the same analysis provided in Case~2 on the two terms in $C_2$. Therefore, without loss of generality, we can assume that there are at most two $C_1$ terms and at most one $C_2$ terms.

\begin{figure}[bth]
\centering
\begin{subfigure}[t]{0.45\textwidth}
\centering
\includegraphics{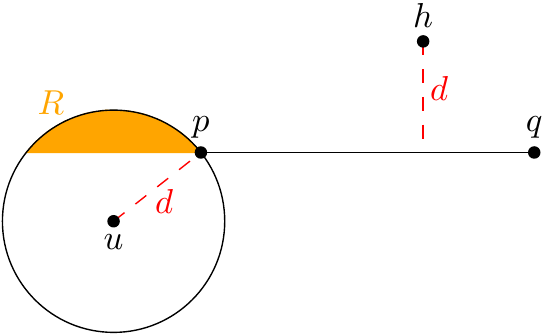}
\caption{The case where the $C_2$ term is $\delta_{\overrightarrow{h}}(M)$.}
\label{21}
\end{subfigure}
\hspace{2em}
\begin{subfigure}[t]{0.45\textwidth}
\centering
\includegraphics{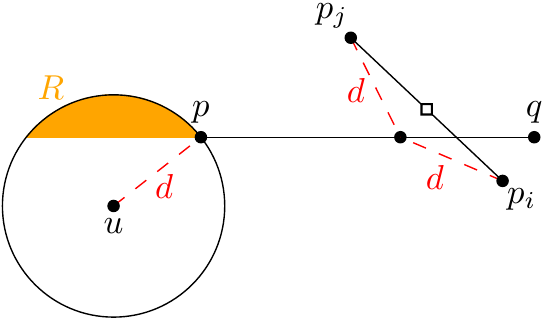}
\caption{The case where the $C_2$ term is $\max \limits_{{u}\leq i \leq j \leq v, p_{i}.x \geq p_{j}.x} B_{\left( p_{i},p_{j}\right) }(l^c)$.}
\label{22}
\end{subfigure}
\caption{Reduce the Fr\'echet distance when it is determined by a term of $C_1$ and a term of $C_2$.}
\end{figure}

The $C_2$ term can be either $\delta_{\overrightarrow{h}}(M)$ (see Figure~\ref{21}, where $h$ is the point at distance $d$) or $B(\pi[u,v],l_{y})$ (see Figure~\ref{22}, where $(p_{i},p_{j})$ is the backward pair with distance $d$). The region $R$ shows the intersection of the disk of a single $C_1$ term and the $C_2$ term. We note that since the point of the $C_1$ term and the point of the $C_2$ term lie on opposite sides of $l^c$, this intersection lies either entirely to the left or entirely to the right of $p$ or $q$, determining the direction in which our next candidate must lie.

The same procedure can be applied when there are two $C_1$ terms and using similar arguments, it can be shown that if the intersection is not empty, the direction to improve the Fr\'echet distance is uniquely determined. 

\textbf{Convexity.} Next, we show that $D_{\pi[u,v]}(L,x_1)$ is a convex function with respect to the parameter $x_1$. Let $l_{y}^c$ be the current horizontal segment and assume without loss of generality that the decision algorithm moves right to a new segment $l_{y'}$; see Figure~\ref{23}. Consider a linear interpolation from $l_{y}^c$ to $l_{y'}$. Let $l_{y''}$ be the segment at the midpoint of this linear interpolation. Since~$D_{\pi[u,v]}(L,x_1)$ is a continuous function, for continuous functions, convex is the same as midpoint convex, this implies that we only need to show that~$D_{\pi[u,v]}(L,x_1)$ is midpoint convex. 

Consider the two mappings that minimise the Fr{\'e}chet distance between $\pi\left[ u,v\right]$ and the horizontal segments $l_{y}^c$ and $l_{y'}$. Let $r$ be any point on~$\pi\left[ u,v\right]$ and let $a$ and $c$ be the points where $r$ is mapped to on $l_{y}$ and $l_{y'}$. 
Construct a point $b$ on $l_{y''}$ where $r$ will be mapped to by linearly interpolating $a$ and $c$. Performing this transformation for every point on $\pi[u,v]$, we obtain a valid mapping for $l_{y''}$, though not necessarily one of minimum Fr\'echet distance. 

We bound the distance between $r$ and $b$ in terms of $\Vert ra \Vert$ and $\Vert rc \Vert$. Consider the parallelogram consisting of $a$, $r$, $b$, and a point $r'$ that is distance $\Vert ra \Vert$ from $c$ and distance $\Vert rc \Vert$ from $a$; see Figure~\ref{24}. Since~$b$ is the midpoint of~$ac$, it is also the midpoint of $rr'$ in this parallelogram. We can conclude that $\Vert rb \Vert \leq (\Vert ra \Vert+\Vert rc \Vert)/2$ in this mapping. 

Since this property holds for any point $r$ on $\pi[u,v]$ and the Fr\'echet distance is the minimum over all possible mappings, the Fr\'echet distance of $l_{y''}$ is upper bounded by the average of the Fr\'echet distances of $l_{y}^c$ and $l_{y'}$. Therefore, the decision problem is convex. 

\begin{figure}[bth]
\centering
\begin{subfigure}[t]{0.55\textwidth}
\centering
\includegraphics{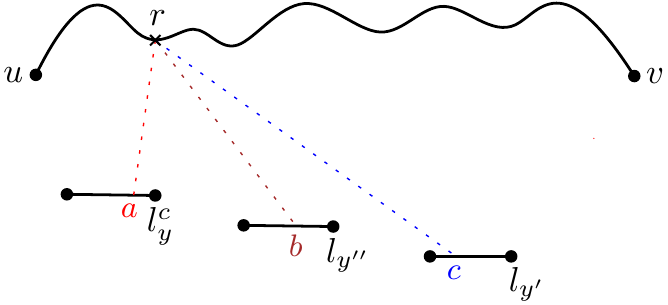}
\caption{Point $r$ is mapped to the three line segments.}
\label{23}
\end{subfigure}
\hspace{2em}
\begin{subfigure}[t]{0.3\textwidth}
\centering
\includegraphics{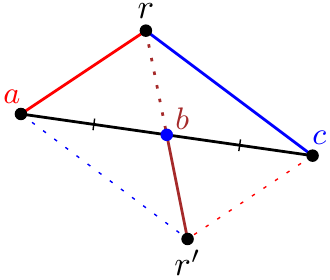}
\caption{Upper bounding $\Vert rb \Vert$.}
\label{24}
\end{subfigure}
\caption{The decision algorithm is a convex function with respect to the left endpoint of the line segment.}
\end{figure}

\textbf{Critical Values.} An $x$-coordinate $x_1$ is a critical value of a decision algorithm if the decision algorithm could feasibly return that the left endpoint of $l$ has $x$-coordinate $x_1$. 

For the $C_{1}$ class, these critical values are determined by up to three $C_1$ terms: the vertices themselves, the midpoint of any pair of vertices, and the center of the circle through the three (translated) points determining the Fr\'echet distance. Since each term in $C_1$ consists of at most $n$ points, there are $\mathcal{O}(n^{3})$ critical values in \textit{Case 1}. 

For the $C_{2}$ class, these critical values are the $x$-coordinates of the intersection points and minima of two hyperbolic functions, one for each element of each pair of two terms. 
Therefore, there are only a constant number of critical values for each two terms.  Each term gives rise to at most $\mathcal{O}(n^{2})$ hyperbolic functions (specifically, $B(\pi[u,v],l_{y})$ can be of size $\Theta(n^2)$ in the worst case). Thus, there are at most $\mathcal{O}(n^{4})$ critical values in \textit{Case 2}.

Using similar arguments, it can be shown that there are at most $\mathcal{O}(n^4)$ critical values in \textit{Case 3}, as they consist of at most two $C_1$ terms and at most one $C_2$ term. 

\textbf{Query Complexity.} The decision algorithm runs in $T_{s}=\mathcal{O}(\log^{16} n)$ time since we use Theorem~\ref{thm:vers2} to compute the optimal placement for a fixed left endpoint. The parallel form of the decision algorithm runs on one processor in $T_{p}=\mathcal{O}(\log^{16} n)$ time. Substituting these values in the running time of the parametric search of $\mathcal{O} (PT_{p}+T_{p}T_{s}\log P)$ leads to $\mathcal{O}(\log^{32} n)$ time.

\textbf{Preprocessing and Space.} Since we use the algorithm of Theorem~\ref{thm:vers2} to the optimal placement of $l^c$ for a given $x$-coordinate of its left endpoint, this requires $\mathcal{O}(n^{2}\log^{2} n)$ preprocessing time and $\mathcal O(n^{3/2})$ space. 
 \end{proof}

\section{Conclusion}

In this paper, we answered Translation Invariant Frechet distance queries between a horizontal query segment and a subtrajectory of a preprocessed trajectory. The most closely related result is that of De Berg~\etal~\cite{10}, which computes the normal Fr\'echet distance between a subtrajectory and a horizontal query segment. We extended this work in two ways. Firstly, we considered all subtrajectories, not just vertex-to-vertex subtrajectories. Secondly, we computed the optimal translation for minimising the Fr\'echet distance, thus our approach allowed us to compute both the normal Fr\'echet distance and the Translation Invariant Fr\'echet distance. All our queries can be answered in polylogarithmic time.

In terms of future work, one avenue would be to improve the query times. While our approach has polylogarithmic query time, the $\mathcal{O}(\log^{32} n)$ time needed for querying the optimal placement under translation is far from practical. Furthermore, reducing the preprocessing time or space of the data structure would this would make the approach more appealing. 

Other future work takes the form of generalising our queries further. In our most general form, we still work with a fixed length line segment with a fixed orientation. An interesting open problem is to see if we can also determine the optimal length of the line segment efficiently at query time. Allowing the line segment to have an arbitrary orientation seems a difficult problem to generalise our approach to, since the data structures we use assume that the line segment is horizontal. This can be extended to accommodate a constant number of orientations instead, but to extend this to truly arbitrary orientations, given at query time, will require significant modifications and novel ideas.


\bibliographystyle{plain}
\bibliography{main.bib} 

\newpage

\begin{appendices}
\section{Improving the Number of Critical Values}
\label{app:improvement}
In order to show that it suffices to use a set of critical values of size $\mathcal{O}(n)$ instead of $\mathcal{O}(n^2)$ to compute $p'$ and $q'$, we look more formally at what property a candidate needs to satisfy. 

\begin{definition}
\label{defn: represents_mu}
A point $s$ represents $p_u$ if and only if there exists a non-decreasing continuous mapping $\mu: \pi[u,v] \to pq$ such that $\mu$ achieves the Fr{\'e}chet distance and $\mu(p_u) = s$.
\end{definition}

Now we define a collection of points on $pq$ that could feasibly be representatives.

\begin{definition}
Given any vertex $p_i$ on the subtrajectory $\pi[u,v]$, let $p^*_i$ be the orthogonal projection of vertex $p_i$ onto the horizontal segment $pq$.
\end{definition}

\begin{definition}
Given any two vertices $p_i$ and $p_j$ on the subtrajectory $\pi[u,v]$, let $P_{ij}$ be the perpendicular bisector of $p_i$ and $p_j$. Let $P_{ij}^*$ be the intersection of the perpendicular bisector $P_{ij}$ with the horizontal segment $pq$.
\end{definition}

We now have all we need in place to define our set $S$ of candidates for $p'$ and $q'$. 

\begin{definition}
\label{defn: candidate_S}
Let $S$ be the set containing the following elements:
\begin{enumerate}
    \item the points $p$ and $q$, 
    \item all orthogonal projection points $p^*_i$, and
    \item all perpendicular bisector intersection points $P_{ij}^*$.
\end{enumerate}
\end{definition}

It now suffices to show that $S$ contains at least one representative for $p_u$. An analogous argument shows that $S$ contains a representative of $p_v$ as well. 

\begin{lemma}
There exists an element $s \in S$ on $pq$ that represents $p_u$.
\label{lem:element4p_u}
\end{lemma}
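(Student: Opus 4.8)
The plan is to take an arbitrary optimal mapping $\mu$ witnessing the Fr\'echet distance between $\pi[u,v]$ and $pq$, look at the point $s_0 = \mu(p_u)$, and show that we can always slide $s_0$ along $pq$ — keeping a valid optimal mapping — until it lands on an element of $S$. The key observation is that $s_0$ can be modified without destroying optimality precisely as long as the move does not increase any of the five terms of the Fr\'echet formula restricted to the two pieces $u p_u \mapsto p s_0$ and $\pi[p_u,v] \mapsto s_0 q$ that $s_0$ induces (via Formula~\ref{3}). As noted in the proof of Theorem~\ref{26}, the only terms that actually depend on the position of $s_0$ are $\Vert p_u s_0\Vert$ and $\delta_{\overrightarrow h}(\pi[p_u,v], s_0 q)$; the other three terms $\Vert up\Vert$, $\Vert vq\Vert$, $B(\pi[p_u,v],y)$ are constants, and the first piece $\delta_F(u p_u, p s_0)$ is itself governed only by $\Vert up\Vert$ and $\Vert p_u s_0\Vert$ (since $u p_u$ and $p s_0$ are single segments sharing the $y$-coordinate issue handled by those endpoint terms). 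So I want to show: among all $s$ on $pq$ for which $\max\{\Vert p_u s\Vert,\ \delta_{\overrightarrow h}(\pi[p_u,v], s q)\}$ is minimized (equivalently, among all optimal representatives of $p_u$), at least one lies in $S$.

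The main step is a case analysis on which term is ``active'' (attains the maximum and is locally tight) at a minimizing $s^*$. If the minimum of $F(\cdot)$ over $pq$ is attained because the constant terms dominate on an entire subinterval, then in particular $p$ or $q$ (an endpoint of that interval, hence an element of $S$) is also a valid representative — so $p,q \in S$ covers this. Otherwise the minimum is attained at a point where the envelope of the position-dependent functions changes behavior: either at the unique minimum of $\Vert p_u s\Vert$, which is the foot of the perpendicular from $p_u$, i.e.\ $p_u^* \in S$; or at the minimum of a single-vertex Hausdorff contribution $\delta_{\overrightarrow h}(p_i, sq)$, which (as recalled in the convexity paragraph of Theorem~\ref{26}) is constant for $s$ left of $p_i^*$ and hyperbolic to the right, so its minimum over $pq$ is attained on the whole segment $[p, p_i^*]$, again reaching $p_i^* \in S$ or $p \in S$; or at a crossing point of two of these convex pieces. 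A crossing of $\Vert p_u s\Vert$ with $\delta_{\overrightarrow h}(p_i, sq)$, or of $\delta_{\overrightarrow h}(p_i,sq)$ with $\delta_{\overrightarrow h}(p_j,sq)$, is exactly a point equidistant (after accounting for the fact that $\delta_{\overrightarrow h}(R)$ measures distance to a fixed endpoint and $\delta_{\overrightarrow h}(M)$ measures vertical projection distance) from two of the relevant points — I would argue that in each such sub-case the equidistant locus meets $pq$ at either some $p_i^*$ or at some perpendicular-bisector intersection $P_{ij}^*$, both of which are in $S$. The cleanest way to run this is: let $s^*$ be the \emph{leftmost} minimizer of $F$ on $pq$; then just to the left of $s^*$ some active function is strictly increasing as we move left, which pins down $s^*$ as the left endpoint $p$, or as a minimum point $p_i^*$, or as a crossing, and in every case I land in $S$.

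I expect the main obstacle to be the bookkeeping in the crossing cases: the three position-dependent ``atoms'' ($\Vert p_u s\Vert$, the left/right Hausdorff term $\delta_{\overrightarrow h}(p_i, sq)$ which is a distance to the fixed point $q$, and the middle Hausdorff term which is a vertical-projection distance) have genuinely different geometric forms, so ``two of them are equal at $s^*$'' does not uniformly say ``$s^*$ is on a perpendicular bisector.'' I will need to check each pairing separately and confirm that the intersection of the relevant equidistant curve with the horizontal line $pq$ is one of the $p_i^*$ or $P_{ij}^*$ points — e.g.\ a crossing involving the middle term happens where the vertical drop to $pq$ equals the distance to another point, whose relevant projection onto $pq$ is still captured by a $p_i^*$ once one notes the middle term's minimum is $0$ and is attained over an interval, so a nontrivial crossing with it reduces to a crossing between the other two atoms. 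A secondary, easier point to nail down is the degenerate case where $p_u$ or $p_v$ does not exist, but that is already dispatched in the text before Theorem~\ref{26}. Once the case analysis is complete, Definition~\ref{defn: represents_mu} gives the conclusion immediately: the $s^* \in S$ we produced admits an optimal non-decreasing mapping with $\mu(p_u) = s^*$, so $s^*$ represents $p_u$.
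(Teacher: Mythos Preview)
Your envelope approach is a reasonable alternative to the paper's argument, but as written it has a real gap in the case where one of the position-\emph{independent} terms (namely $\|up\|$, $\|vq\|$, $B(\pi[p_u,v],y)$, $\delta_{\overrightarrow h}(R)$, or the constant part $|y-p_i.y|$ of some single-vertex Hausdorff atom $h_i(s)=\delta_{\overrightarrow h}(p_i,sq)$) attains the minimum value $C$ of $F$. Then the minimizing set of $F$ is a nondegenerate interval $[a,b]\subset pq$, and your claim that ``$p$ or $q$ (an endpoint of that interval) covers this'' conflates the endpoints of the minimizing interval with the endpoints of $pq$. Concretely, among all atoms the only one that is ever strictly decreasing in $s$ is $g(s)=\|p_u s\|$ on $[p,p_u^*]$, so (when $a\neq p$) the leftmost minimizer $a$ is precisely the point where $g(a)=C$; this is where a hyperbola meets a horizontal line and is generically \emph{not} in $S$. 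The same problem resurfaces in your crossing discussion: a crossing of $g$ with the constant plateau of some $h_i$ is exactly this bad case, and the sentence ``a nontrivial crossing with it reduces to a crossing between the other two atoms'' does not hold.

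What is missing is an intermediate-value step that uses \emph{both} ends of the minimizing interval. At the right endpoint $b$ (assuming $b\neq q$) some strictly increasing atom is active: either $g$ again, whence $a<p_u^*<b$ and $p_u^*\in S\cap[a,b]$; or the hyperbolic branch of some $h_i$, so $\|p_i b\|=C$ with $b>p_i^*$. In the second case, if $a\le p_i^*$ then $p_i^*\in S\cap[a,b]$; otherwise $a>p_i^*$, so $\|p_u a\|=C\ge\|p_i a\|$ while $\|p_u b\|\le C=\|p_i b\|$, and by the intermediate value theorem some point of $[a,b]$ is equidistant from $p_u$ and $p_i$, i.e.\ $P_{ui}^*\in S\cap[a,b]$. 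This IVT step is exactly the engine of the paper's proof, which is phrased as a contradiction: assume $\mu(p_u)$ lies strictly between two consecutive elements $s_L,s_R$ of $S$, build modified mappings $\mu_L$ and $\mu_R$ to force $\|p_u s_L\|>d$ and $\|p_i s_R\|>d$ for some vertex $p_i$ of $\pi[p_u,v]$, and then apply IVT to $\|p_u x\|-\|p_i x\|$ on $[s_L,s_R]$ to produce $P_{ui}^*$ between them. Once you add this two-endpoint/IVT argument, your envelope route goes through.
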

\begin{proof}
Assume for the sake of contradiction that there is no element $s \in S$ which represents $p_u$. Consider a mapping $\mu$ that achieves the Fr{\'e}chet distance and consider the point $\mu(p_u)$ on the horizontal segment $pq$. Since $\mu(p_u)$ represents $p_u$, $\mu(p_u)$ cannot be in $S$ and must lie strictly between two consecutive elements of $S$, say $s_L$ to its left and $s_R$ to its right (see Figure~\ref{fig:slsr}). Note that it may be the case that $s_L = p$ or $s_R = q$. Since $s_L$ and $s_R$ are elements of $S$, neither can represent $p_u$. Next, we reason about the implications of $s_L$ and $s_R$ not being able to represent $p_u$, before putting these together to obtain a contradiction.

\begin{figure}[h]
    \centering
    \includegraphics{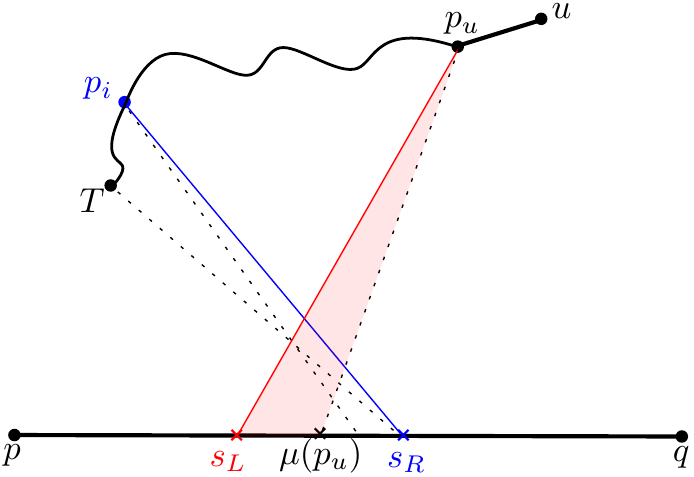}
    \caption{The point $\mu(p_u)$ lies between two consecutive elements $s_L$ and $s_R$. Distances that are greater than $d$ are thin solid and distances that are at most $d$ are dotted, where $d$ is the Fr{\'e}chet distance.}
    \label{fig:slsr}
\end{figure}

\textbf{$\boldsymbol{s_L}$ cannot represent $\boldsymbol{p_u}$.} This means that no mapping which sends $p_u \to s_L$ achieves the Fr{\'e}chet distance. Let us take the mapping $\mu$ and modify it into a new mapping $\mu_L$ in such a way that $\mu_L(p_u) = s_L$. We can do so by starting out parametrising $\mu_L$ with a constant speed mapping which sends $u \to p$ and $p_u \to s_L$. Next, we stay fixed at $p_u$ along the subtrajectory and move along the horizontal segment from $s_L$ to $\mu(p_u)$. The red shaded region in Figure~\ref{fig:slsr} describes this portion of the remapping. Now that $\mu_L(p_u) = \mu(p_u)$, we can use the original mapping for the rest. 

Since our new mapping $\mu_L$ maps $p_u$ to an element of $S$ that cannot represent it, we know that our modification must increase the Fr{\'e}chet distance. The only place where the Fr{\'e}chet distance could have increased is at the line segments where the mapping was changed and here $\mu_L(p_u) = s_L$ maximises the Fr\'echet distance. Hence, we have $ \Vert p_u s_L \Vert  > d$, where $d$ is the Fr{\'echet} distance, as shown in Figure~\ref{fig:slsr}. But $ \Vert p_u \mu(p_u) \Vert  \leq d$, so we can deduce that $p_u$ is closer to $\mu(p_u)$ than $s_L$. Therefore, $p_u$ is to the right of $s_L$. Finally, if $s_L$ and $s_R$ were on opposite sides of $p_u^*$, then $s_L$ and $s_R$ would not be consecutive, therefore $p_u$ must be on the same side of $s_L$ and $s_R$. Therefore, $p_u$ is to the right of the entire segment $s_L s_R$.

\textbf{$\boldsymbol{s_R}$ cannot represent $\boldsymbol{p_u}$.} Again, no mapping which sends $p_u \to s_R$ achieves the Fr{\'e}chet distance, so we use the same approach and modify $\mu$ into a new mapping mapping $\mu_R$ in such a way that $\mu_R(p_u) = s_R$. To this end, we keep the mapping $\mu_R$ the same as $\mu$ until it reaches $p_u$, and then while staying at $p_u$, we fastforward the movement from $\mu(p_u)$ along the horizontal segment so that $\mu_R(p_u) = s_R$. Next, we stay at $s_R$ and fastforward the movement along the subtrajectory, until we reach the first point $T$ on the subtrajectory such that $\mu(T) = s_R$ in the original mapping. From point $T$ onwards we can use the original mapping $\mu$. 

Since our new mapping $\mu_R$ maps $p_u$ to an element of $S$ that does not represent it, we cannot have achieved the Fr{\'e}chet distance. The first change we applied was staying at $p_u$ and fastforwarding the movement from $\mu(p_u)$ to $s_R$. However, since we know from above that $p_u$ is to the right of the entire segment $s_L s_R$, this fastforwarding moves closer to $p_u$, so this part cannot increase the Fr{\'e}chet distance. The second change we applied, staying at $s_R$ and fastforwarding the movement from $p_u$ to $T$, must therefore be the change that increases the Fr{\'e}chet distance. Thus, there must be a point on the subtrajectory $\pi[p_u,T]$ which has distance greater than $d$, the Fr{\'e}chet distance, to the point $s_R$. Since the distance to a point $s_R$ is maximal at vertices of $\pi[p_u,T]$, we can assume without loss of generality that $ \Vert p_i s_R \Vert  > d$ for some vertex $p_i$. Consider $\mu(p_i)$ in the original mapping. Since $p_i$ is on the subtrajectory $\pi[p_u,T]$, $\mu(p_i)$ must be between $\mu(p_u)$ and $\mu(T) = s_R$. This mapping of $p_i$ to $\mu(p_i)$ is shown as a black dotted line in Figure~\ref{fig:slsr}. Using a similar logic as before, $ \Vert p_i \mu(p_i) \Vert  \leq d$ and $ \Vert p_i s_R \Vert  > d$, so $p_i$ must lie to left of $s_R$. And since $s_L$ and $s_R$ are consecutive elements of $S$, we deduce that $p_i$ is to the left of the entire segment $s_L s_R$.

\textbf{Putting these together.} We now have the full diagram as shown in Figure~\ref{fig:slsr}. The vertex $p_u$ is to the right of both $s_L$ and $s_R$ and the vertex $p_i$ is to the left of both $s_L$ and $s_R$. We also have inferred that $ \Vert p_u s_L \Vert  > d$ and $ \Vert p_i s_R \Vert  > d$. Moreover, since $ \Vert p_u \mu(p_u) \Vert  \leq d$ and $ \Vert p_i \mu(p_i) \Vert  \leq d$, we also have that $ \Vert p_u s_R \Vert  \leq d$ and $ \Vert p_i s_L \Vert  \leq d$, since this just moves these endpoints closer to $p_u$ and $p_i$ respectively. 

Finally, we will show that $P_{ui}^*$ lies between $s_L$ and $s_R$, reaching the intended contradiction. We do so by considering the function $f(x) =  \Vert x p_u \Vert  -  \Vert x p_i \Vert $ for all points $x$ between $s_L$ and $s_R$. From our length conditions, we have that $f(s_L) > 0$, $f(s_R) < 0$. Furthermore, since $f(x)$ is a continuous function, by the intermediate value theorem, there is a point $x$ strictly between $s_L$ and $s_R$ such that $f(x) = 0$. Since $f(x) = 0$, the point $x$ is equidistant from $p_u$ and $p_i$ so therefore lies on both $P_{ui}$ and the horizontal segment $pq$. Therefore $x = P_{ui}^*$ and is an element of $S$ between two consecutive elements $s_L$ and $s_R$, giving us a contradiction.
 \end{proof}

Note that in the above proof, we require only $P_{ui}^*$ to be in the candidate set when we are computing $p'$, and also only when $(p_u, p_i)$ is a backward pair. This means that for computing $p'$ and $q'$ respectively, we only require the bisector intersections $P_{ui}^*$ and $P_{jv}^*$ to be in $S$, hence reducing the size of $S$ from $\mathcal O(n^2)$ to $\mathcal O(n)$. 
\end{appendices}

\end{document}